\documentclass{llncs}
\pagestyle{headings}

\usepackage[utf8]{inputenc}
\usepackage[T1]{fontenc}
\usepackage{amsmath}
\usepackage{amssymb}
\usepackage{mathtools}
\usepackage{bussproofs}
\usepackage{prftree}
\usepackage{url}
\usepackage{cancel}

\usepackage{rotating}
\usepackage{multirow}

\usepackage{graphicx}

\usepackage[ruled,linesnumbered]{algorithm2e}

\usepackage{floatflt}
\usepackage{times}

\usepackage{xcolor}
\definecolor{mygreen}{RGB}{0,159,57}
\definecolor{myyellow}{RGB}{255,200,0}
\definecolor{myorange}{RGB}{255,140,0}
\colorlet{myred}{red!90}
\definecolor{myblue}{RGB}{0,50,255}

\newcommand{\eql}{{\,\simeq\,}}
\newcommand{\neql}{\not\simeq}

\usepackage{ stmaryrd }

\newcommand{\vampire}{\textsc{Vampire}}
\newcommand{\spass}{\textsc{Spass}}
\newcommand{\eprover}{\textsc{E}}

\newcommand{\Sup}{\textsc{Sup}}

\newcommand{\hcancel}[1]{\text{\cancel{${#1}$}}}

\begin{document}

\title{
  Subsumption Demodulation in First-Order Theorem Proving
}

\author{
  Bernhard Gleiss\inst{1}
  \and
  Laura Kov{\'a}cs\inst{1\and 2}
  \and
  Jakob Rath\inst{1}
}

\institute{
  TU Wien, Austria
  \and
  Chalmers University of Technology, Sweden
}

\maketitle

\begin{abstract}
%
Motivated by applications of first-order theorem proving to software analysis,
we introduce a new inference rule, called subsumption demodulation, to improve support for
reasoning with conditional equalities in superposition-based theorem proving. 
%
We show that subsumption demodulation is a simplification rule that does not require radical changes to the underlying
superposition calculus. 
We implemented subsumption demodulation in the theorem prover \vampire{},
by extending \vampire{} with a new clause index and adapting its multi-literal matching component.
Our experiments, using the TPTP and SMT-LIB repositories, show that subsumption demodulation in \vampire{} can solve many new problems that could so far not be solved by state-of-the-art reasoners.
\end{abstract}

\section{Introduction}
For the efficiency of organizing proof search during saturation-based first-order theorem proving, simplification rules are of critical importance.
Simplification rules are inference rules that do not add new formulas to the search space,
but simplify formulas by deleting (redundant) clauses from the search space.
As such, simplification rules reduce the size of the search space and are crucial in making automated reasoning efficient.

When reasoning about properties of first-order logic with equality,
one of the most common simplification rules is demodulation~\cite{kovacs2013first} for rewriting (and hence simplifying)
formulas using unit equalities $l\eql r$, where $l,r$ are terms and $\eql$ denotes equality.
As a  special case of superposition, demodulation is  implemented in first-order provers
such as \eprover~\cite{E19}, \spass~\cite{SPASS09} and \vampire~\cite{kovacs2013first}.
Recent applications of superposition-based reasoning, for example to program analysis and verification~\cite{Rapid19},
demand however new and efficient extensions of demodulation to reason about and simplify upon conditional equalities $C\rightarrow l \eql r$,
where $C$ is a first-order formula.
Such conditional equalities may, for example, encode software properties expressed in a guarded command language,
with $C$ denoting a guard (such as a loop condition) and $l \eql r$ encoding equational properties over program variables.
We illustrate the need of considering  generalized versions of demodulation in the following example.

\begin{example}\label{ex:motivating}
Consider the following formulas expressed in the first-order theory of integer linear arithmetic: 
%
%
\begin{equation}\label{eq:motivating:dem:premises}
\begin{array}{ll}
  & f(i)\eql g(i)\\
 & 0\leq i < n \rightarrow P(f(i))
\end{array}
\end{equation}
Here,  $i$ is an implicitly  universally quantified logical variable of integer sort, and $n$ is integer-valued constant. First-order reasoners will first clausify formulas~\eqref{eq:motivating:dem:premises}, deriving:
\begin{equation}\label{eq:motivating:dem:premises:cnf}
\begin{array}{ll}
  & f(i)\eql g(i)\\
 & 0 \nleq i  \lor i \nless n \lor P(f(i))
\end{array}
\end{equation}
By applying demodulation over~\eqref{eq:motivating:dem:premises:cnf},
the formula $0 \nleq i \lor i \nless n \lor P(f(i))$ is rewritten\footnote{assuming that $g$ is simpler/smaller than $f$}
using the unit equality $f(i)\eql g(i)$, yielding the clause $0 \nleq i  \lor i \nless n \lor P(g(i))$.
That is, $0 \leq i  <n \rightarrow  P(g(i))$ is derived from~\eqref{eq:motivating:dem:premises} by one application of demodulation.

Let us now consider a slightly modified version of~\eqref{eq:motivating:dem:premises}, as below: 
\begin{equation}\label{eq:motivating:premises}
\begin{array}{ll}
	 &0\leq i <n \rightarrow f(i)\eql g(i)\\
        &0\leq i < n \rightarrow P(f(i))
\end{array}
\end{equation}
whose clausal representation is given by: 
\begin{equation}\label{eq:motivating:premises:cnf}
\begin{array}{ll}
	 &0 \nleq i  \lor i \nless n \lor f(i)\eql g(i)\\
        &0 \nleq i  \lor i \nless n \lor  P(f(i))
\end{array}
\end{equation}


It is again obvious that from~\eqref{eq:motivating:premises} one can derive the formula $0\leq i <n \rightarrow P(g(i))$, or equivalently the clause:
\begin{equation}\label{eq:motivating:concl:cnf}
  0 \nleq i  \lor i \nless n \lor  P(g(i))
\end{equation}
Yet, {\it one cannot anymore apply 
demodulation-based simplification over~\eqref{eq:motivating:premises:cnf} to derive such a clause}, 
as~\eqref{eq:motivating:premises:cnf} contains no unit equality.
\qed
\end{example}
%
%

In this paper we propose a generalized version of demodulation, called {\it subsumption demodulation}, allowing 
to rewrite terms and simplify formulas using rewriting based on conditional equalities, such as in~\eqref{eq:motivating:premises}.
To do so, we extend demodulation with subsumption, that is with deciding whether (an instance of a) clause $C$ is a submultiset of a  clause $D$. This way, subsumption demodulation can be applied to non-unit clauses and is not restricted to have at least one premise clause that is a unit equality.  
We show that subsumption demodulation is a simplification rule of the superposition framework (Section~\ref{sec:sd}), allowing for example to derive the clause~\eqref{eq:motivating:concl:cnf} from~\eqref{eq:motivating:premises} in one inference step.
 By properly adjusting clause indexing and multi-literal matching in first-oder theorem provers, we provide an efficient implementation of subsumption demodulation in \vampire{} (Section~\ref{sec:implementation}) and evaluate our work against state-of-the-art reasoners, including \eprover~\cite{E19}, \spass~\cite{SPASS09}, {\sc CVC4}~\cite{CVC4} and {\sc Z3}~\cite{Z3} (Section~\ref{sec:experiments}).

\paragraph{\bf Related work.}
While several approaches generalize demodulation in superposition-based theorem proving, we argue that subsumption demodulation improves existing methods either in terms of applicability and/or efficiency.
The AVATAR architecture of first-order provers~\cite{AVATAR14}
splits general clauses into components with disjoint sets of variables, 
potentially enabling demodulation inferences whenever some of these components become unit equalities.  
Example~\ref{ex:motivating} demonstrates that subsumption demodulation solves applies in situations where AVATAR does not:
in each clause of~\eqref{eq:motivating:premises:cnf}, all literals share the variable $i$ and
hence none of the clauses from~\eqref{eq:motivating:premises:cnf} can be split using AVATAR.
That is, AVATAR would not generate unit equalities from~\eqref{eq:motivating:premises:cnf}, and therefore cannot apply demodulation over~\eqref{eq:motivating:premises:cnf} to derive~\eqref{eq:motivating:concl:cnf}. 

The local rewriting approach of~\cite{Weidenbach01} requires rewriting equality literals to be maximal\footnote{w.r.t. clause ordering} in clauses.
However, following~\cite{kovacs2013first}, for efficiency reasons we consider equality literals to be ``smaller'' than non-equality literals.
In particular, the equality literals of clauses~\eqref{eq:motivating:premises:cnf} are ``smaller'' than the non-equality literals,
preventing thus the application of {local rewriting} in Example~\ref{ex:motivating}.

We further note that the contextual rewriting rule of~\cite{BG94} is more general than our rule of subsumption demodulation.
Yet, efficiently automating contextual rewriting is extremely challenging,
while subsumption demodulation requires no radical changes in the existing machinery of superposition provers (see Section~\ref{sec:implementation}).

To the best of our knowledge, except \spass~\cite{SPASS09},
no other state-of-the-art superposition prover implements variants of conditional rewriting.
%
Subterm contextual rewriting~\cite{WeidenbachWischnewski:2008}
is a refined notion of contextual rewriting and is implemented in \spass.
A major difference of subterm contextual rewriting when compared to subsumption demodulation is that
in subsumption demodulation the discovery of the
substitution is driven by the side conditions
whereas in subterm contextual rewriting the side conditions
are evaluated by checking the validity of certain implications
by means of a reduction calculus.
This reduction calculus recursively applies another restriction of contextual rewriting
called recursive contextual ground rewriting, among other standard reduction rules.
While subterm contextual rewriting is more general, we believe that
the benefit of subsumption demodulation comes with its relatively easy and efficient integration within existing superposition reasoners,
as evidenced also in Section~\ref{sec:experiments}.

Local contextual rewriting~\cite{HillenbrandPiskacWaldmannWeidenbach:2013} is another refinement of contextual rewriting implemented in \spass.
In our experiments it performed similarly to subterm contextual rewriting.

Finally, we note that SMT-based reasoners also implement various methods to efficiently handle conditional equalities, see e.g.~\cite{CVC4Strings,Horn15}.
Yet, the setting is very different as they rely on the DPLL(T) framework~\cite{DPLLT} rather than implementing superposition.


\paragraph{\bf Contributions.}
Summarizing, this paper brings the following contributions.
\begin{itemize}
    \item To improve reasoning in the presence of conditional equalities, we introduce
        the new inference rule {\it subsumption demodulation}, which generalizes demodulation to
        non-unit equalities by combining demodulation and subsumption (Section~\ref{sec:sd}).
    \item Subsumption demodulation does not require radical changes to the underlying
        superposition calculus. We implemented subsumption demodulation in the first-order theorem prover \vampire, by extending \vampire{} with a new clause index and adapting its multi-literal matching component (Section~\ref{sec:implementation}). 
    \item
        We compared our work against state-of-the-art reasoners,
        using the TPTP and SMT-LIB benchmark repositories.
        Our experiments show that subsumption demodulation in \vampire{} can solve 11 first-order problems
        that could so far not be solved by any other state-of-the-art provers,
        including \vampire{}, \eprover, \spass{},  \textsc{CVC4} and {\sc Z3} (Section~\ref{sec:experiments}).
\end{itemize}

\section{Preliminaries}\label{sec:prelim}
\vspace{-0.3\baselineskip}
For simplicity, in what follows we consider standard first-order logic with
equality, where equality is denoted by $\eql$.
We support all standard boolean
connectives and quantifiers in the language. Throughout the paper, we
denote terms by $l,r,s,t$, variables by $x,y$, constants by $c,d$,
function symbols by $f,g$ and predicate symbols by $P, Q, R$, all possibly with
indices. Further, we denote literals by $L$ and
clauses by $C, D$,
again possibly with indices. We
write $s\neql t$ to denote the formula $\neg s\eql t$.
A literal $s\eql t$ is called  an \emph{equality literal}.
We consider clauses as multisets of literals and denote by
$\subseteq_M$ the subset relation among multisets.
A clause that only consists of one one equality literal is called a \emph{unit equality}.

An expression $E$ is a term, literal, or clause. We write $E[s]$ to
mean an expression $E$ with a particular occurrence of a term $s$.
A \emph{substitution}, denoted by $\sigma$,  is any finite mapping of
the form $\{x_1\mapsto t_1, \ldots, x_n\mapsto t_n\}$, where $n > 0$. Applying a
substitution $\sigma$ to an expression $E$ yields another
expression, denoted by $E\sigma$, by simultaneously replacing each
$x_i$ by $t_i$ in $E$. We say that $E\sigma$ is an instance of $E$. A \emph{unifier} of two expressions $E_1$ and
$E_2$ is a substitution $\sigma$ such that $E_1\sigma =
E_2\sigma$. If two expressions have a unifier,  they also have a
\emph{most general unifier (mgu)}.
A \emph{match} of expression $E_1$ to expression $E_2$
is a substitution $\sigma$ such that $E_1\sigma = E_2$.
Note that any match is a unifier (assuming the sets of variables in $E_1$ and $E_2$ are disjoint),
but not vice-versa, as illustrated below.

\begin{example}\label{ex:match:unif}
Let $E_1$ and $E_2$ be the clauses $Q(x,y)\lor R(x,y)$
and $Q(c,d)\lor R(c,z)$, respectively.
The only possible match of $Q(x,y)$ to $Q(c,d)$ is $\sigma_1=\{x\mapsto c, y\mapsto d\}$.
On the other hand,
the only possible match of $R(x,y)$ to $R(c,z)$ is $\sigma_2=\{x\mapsto c, y\mapsto z\}$.
As $\sigma_1$ and $\sigma_2$ are not the same, there is no match of $E_1$ to $E_2$.
Note however that $E_1$ and $E_2$ can be unified;
for example, using $\sigma_3=\{x\mapsto c, y\mapsto d, z\mapsto d\}$.
\end{example}

\paragraph{\bf Superposition inference system.}
We assume basic knowledge in first-order theorem proving and
superposition reasoning~\cite{Ganzinger01,Rubio01}. 
We adopt the notations and the inference system of superposition
from~\cite{kovacs2013first}. We recall that first-order provers perform inferences
on clauses using inference rules, where  an \emph{inference} is usually written as: 
\vspace{-0.5em}
\begin{prooftree}
    \AxiomC{$C_1\qquad \ldots$}
    \AxiomC{$C_n$}
    \BinaryInfC{$C$}
\end{prooftree}
 with $n\geq 0$. The clauses $C_1, \ldots, C_n$ are called the
 premises and $C$ is the conclusion of the inference above.
 An inference is \emph{sound} if its conclusion is a
logical consequence of its premises.
An inference rule is a set of inferences and an inference system is a
set of inference rules. An inference system is \emph{sound} if all its
inference rules are sound.

Modern  first-order theorem provers implement the
\emph{superposition inference system} for first-order logic with
equality. This inference system is parametrized by
a \emph{simplification ordering} over
terms and a \emph{literal selection function} over clauses.
In what follows, we denote by  $\succ$ a simplification ordering over
terms, that is  $\succ$ is a {well-founded partial ordering} satisfying
the following three conditions:
\vspace{-0.7\baselineskip}
\begin{itemize}
\item \emph{stability under substitutions}: if $s\succ t$, then $s\theta \succ t\theta$;
\item \emph{monotonicity}: if $s\succ t$, then $l[s] \succ l[t]$;
\item \emph{subterm property}:   $s\succ t$ whenever $t$ is a proper subterm of $s$.
\end{itemize}
 The simplification ordering $\succ$ on terms can be extended to a simplification
 ordering on literals and clauses, using a multiset extension of orderings. For simplicity, the
 extension of $\succ$ to literals and clauses will also be denoted by
 $\succ$. Whenever $E_1 \succ E_2$, we say that $E_1$ is bigger than
 $E_2$ and $E_2$ is smaller than $E_1$ w.r.t. $\succ$.
We say that an equality literal $s\eql t$ is \emph{oriented}, if
$s\succ t$ or $t\succ s$. 
The literal extension of  $\succ$ asserts
 that negative literals are always bigger than their positive
 counterparts. Moreover, if $L_1\succ L_2$, where $L_1$ and $L_2$ are
 positive, then $\neg L_1\succ L_1\succ  \neg L_2 \succ L_2$.
Finally,  equality literals are set to be smaller than
any literal using a predicate different than $\eql$.
 
A \emph{selection function} selects at least one literal in every
non-empty clause. In what follows, selected literals in clauses will
be underlined: when writing $\underline{L}\lor C$, we mean that (at
least) $L$ is selected in $L\lor C$. In what follows, we assume that
selection functions are \emph{well-behaved} w.r.t. $\succ$: either a
negative literal is selected or
all maximal literals w.r.t. $\succ$ are selected.

In the sequel, we fix a simplification ordering $\succ$ and a
well-behaved selection function and consider the superposition
inference system, denoted by $\Sup$,  parametrized by these
two ingredients. The inference system $\Sup$ for first-order
logic with equality consists of the
inference rules of Figure~\ref{fig:sup}, and it is both sound and
refutationally complete. That is, if a set $S$ of clauses is
unsatisfiable, then the empty clause (that is, the always false
formula) is derivable from $S$ in $\Sup$.

\begin{figure}[t]
  {\small 
  \begin{itemize}
  \item Resolution and Factoring
    \[
    \AxiomC{$\underline{L} \lor C_1$}
    \AxiomC{$\underline{\lnot L'} \lor C_2$}
    \BinaryInfC{$(C_1\lor C_2)\sigma$}
    \DisplayProof
    \hspace{6em}
    \AxiomC{$\underline{L} \lor \underline{L'} \lor C$}
    \UnaryInfC{$(L \lor C)\sigma$}
    \DisplayProof
    \]
    where $L$ is not an equality literal and $\sigma = mgu(L, L')$
    \\

  \item Superposition
    \[
    \AxiomC{$\underline{s \eql t} \lor C_1$}
    \AxiomC{$\underline{L \lbrack s' \rbrack} \lor C_2$}
    \BinaryInfC{$(C_1 \lor L \lbrack t \rbrack \lor C_2) \theta$}
    \DisplayProof\qquad\qquad\qquad\qquad\qquad\qquad\qquad\qquad\quad
  \]
  
  \[
    \AxiomC{$\underline{s \eql t} \lor C_1$}
    \AxiomC{$\underline{l \lbrack s' \rbrack \eql l'} \lor C_2$}
    \BinaryInfC{$(C_1 \lor l \lbrack t \rbrack \eql l' \lor C_2) \theta$}
    \DisplayProof
    \hspace{3em}
    \AxiomC{$\underline{s \eql t} \lor C_1$}
    \AxiomC{$\underline{l \lbrack s' \rbrack \neql l'} \lor C_2$}
    \BinaryInfC{$(C_1 \lor l \lbrack t \rbrack \neql l' \lor C_2) \theta$}
    \DisplayProof
  \]
    where $s'$ not a variable, $L$ is not an equality,
    $\theta = mgu(s, s')$, $t \theta \not\succ s \theta$ and $
    l'\theta\not\succ l \lbrack
    s' \rbrack \theta$ \\

  \item Equality Resolution and Equality Factoring
    \[
    \AxiomC{$\underline{s \neql s'} \lor C$}
    \UnaryInfC{$C \theta$}
    \DisplayProof
    \hspace{6em}
    \AxiomC{$s \eql t \lor \underline{s' \eql t'} \lor C$}
    \UnaryInfC{$(s \eql t \lor t \neql t' \lor C)\theta$}
    \DisplayProof
    \]
    where $\theta = mgu(s, s')$,
    $t\theta\not\succ s\theta$ and $t'\theta\not\succ t\theta$
  \end{itemize}}
  \caption{The superposition calculus $\Sup$\label{fig:sup}.}
\end{figure}

\section{Superposition-based Proof Search}\label{sec:sup}
We now overview the main ingredients in organizing proof 
search within first-order provers, using  the
superposition calculus. For details, we refer
to~\cite{Ganzinger01,Rubio01,kovacs2013first}. 

Superposition-based provers use \emph{saturation algorithms}: applying
all possible inferences of $\Sup$ in a certain order to the clauses in the
search space until (i) no more inferences can be applied or (ii) the
empty clause has been derived. A simple implementation of a saturation
algorithm would however be very inefficient as applications of all
possible inferences will quickly blow up the search space.

Saturation
algorithms can however be made efficient by exploiting a powerful
concept of \emph{redundancy}: deleting so-called redundant clauses from the
search space by  preserving completeness of $\Sup$. A clause $C$ in a
set $S$ of clauses (i.e. in the search space) is \emph{redundant} in
$S$, if there exist clauses $C_1,\dots, C_n$ in $S$,
such that $C\succ C_i$  and $C_1,\dots, C_n \vDash C$. That is, a
clause $C$ is redundant in $S$ if it is a logical consequence of
clauses that are smaller than $C$ w.r.t. $\succ$.
It is known that redundant clause can be removed from the search space
without affecting completeness of superposition-based proof
search. For this reason, saturation-based theorem provers, such as \eprover{},
\spass{} and \vampire{}, not only generate new clauses but also delete
redundant clauses during proof search by using both \emph{generating} and
\emph{simplifying} inferences.

\noindent{\bf Simplification rules.}  A \emph{simplifying inference}
is an inference in which one premise $C_i$ becomes redundant after the
addition of the conclusion $C$ to the search space, and hence $C_i$
can be deleted.
In what follows, we will denote deleted clauses by
drawing a line through it and refer to simplifying inferences as
\emph{simplification rules}. 
The premise $C_i$ that becomes redundant is called
the \emph{main premise}, whereas other premises are
called \emph{side premises} of the simplification rule. Intuitively, 
a simplification rule simplifies its main premise to its conclusion by using
additional knowledge from its side premises. 
Inferences that are not simplifying are
called \emph{generating}, as they generate and add a new clause $C$
to the search space.

In saturation-based proof search,  we
distinguish between \emph{forward} and \emph{backward}
simplifications. 
During forward simplification, a newly derived clause is
simplified using previously derived clauses as side clauses. Conversely, during backward
simplification a newly derived clause is used as side clause to simplify previously derived
clauses.




\noindent{\bf Demodulation.} One example of a simplification rule is \emph{demodulation}, or also called
\emph{rewriting by unit equalities}. Demodulation is the following
inference rule: 

\begin{prooftree}
		\AxiomC{$l \eql r$}
		\AxiomC{{$\cancel{L[t] \lor C}$}}
		\BinaryInfC{$L[r\sigma]\lor C$}
	\end{prooftree}
where $l\sigma=t$,  $l\sigma \succ r\sigma$ and $L[t]\lor C \succ
(l\eql r)\sigma$, for some substitution $\sigma$. 
        
It is easy to see that demodulation is a simplification rule. 
Moreover, demodulation is 
special case of a superposition inference where one premise of the
inference is deleted. However, unlike a superposition inference,
demodulation is not restricted to selected literals.

\begin{example}
    \newcommand{\cside}{   f(f(x)) \eql f(x)  }
    \newcommand{\cmain}{ P(f(f(c))) \lor Q(d) }
    \newcommand{\cconc}{ P(  f(c) ) \lor Q(d) }
    Consider the clauses $C_1 = \cside$ and $C_2 = \cmain$.
    Let $\sigma$ be the substitution $\sigma = \{ x \mapsto c \}$.
    By the subterm property of $\succ$, we have $f(f(c))\succ
    f(c)$. Further, as equality literals are smaller than non-equality
    literals, we have $ P(f(f(c))) \lor Q(d) \succ f(f(c)) \eql f(c)$.
    We thus apply demodulation and 
    $C_2$ is simplified into the clause~$C_3 = \cconc$:
    \[
        \prftree
        { \cside \quad }
        { \hcancel{P({f(f(c))}) \lor Q(d)} }
        {          P({  f(c) }) \lor Q(d)  }
        \tag*{$\qed$}  
    \]
\end{example}



\noindent{\bf Deletion rules.}
Even when simplification rules are in use,
deleting more/other redundant clauses  is still useful to keep the
search space small. For this reason, in addition to simplifying and
generating rules, theorem provers also use \emph{deletion rules}:
a \emph{deletion rule} checks whether clauses in the search space are
redundant due to the presence of other clauses in the search space,
and removes redundant clauses from the
search space.

%
Given clauses $C$ and $D$, we say $C$ subsumes $D$ if there is some
substitution $\sigma$ such that $C\sigma$ is a submultiset of $D$,
that is $C\sigma \subseteq_M D$. 
\emph{Subsumption} is the deletion rule that removes subsumed clauses from the
search space.

\begin{example}
  Let $C = P(x) \lor Q(f(x))$
and  $D = P(f(c)) \lor {P(g(c))} \lor Q(f(c)) \lor {Q(f(g(c)))} \lor
R(y)$ be clauses in the search space. Using  $\sigma=\{ x \mapsto g(c) \}$, it is
easy to see that $C$ subsumes $D$, and hence $D$ is deleted from the
search space.
    \qed
\end{example}




\section{Subsumption Demodulation}\label{sec:sd}

In this section we introduce a new simplification rule, called
subsumption demodulation, by extending demodulation to a
simplification rule over conditional equalities.
We do so by combining demodulation with subsumption checks to find
simplifying applications of rewriting by non-unit (and hence
conditional) equalities.

\subsection{Subsumption Demodulation for Conditional Rewriting}\label{sec:sd:sound}

Our rule of subsumption demodulation is defined below.
\begin{definition}[Subsumption Demodulation]
    \label{def:subsumption-demodulation}
    \emph{Subsumption demodulation}
    is the inference rule:
    \begin{equation}%
        \label{eq:subsumption-demodulation}
        \prftree
        {l \eql r \lor C \quad}
        {{L[t] \lor D}}
        {L[r\sigma] \lor D}
    \end{equation}
    where:
    \begin{enumerate}
        \item
            $l \sigma = t$,
        \item
            $C\sigma \subseteq_M D$,
        \item
            $l\sigma \succ r\sigma$, and
        \item%
            \label{item:side-condition-main-gt-side}
            $L[t] \lor D \succ (l \eql r)\sigma \lor C\sigma$.
    \end{enumerate}

    We call the equality $l \eql r$ in the left premise of~\eqref{eq:subsumption-demodulation}
    the \emph{rewriting equality} of subsumption demodulation.
\end{definition}

It is easy to see that if $l \eql r \lor C$ and $L[t] \lor D$ are valid,
then $L[r\sigma] \lor D$ also holds.
We thus conclude:

\begin{theorem}[Soundness]%
    \label{thm:sd-soundness}
    Subsumption demodulation is sound.
\end{theorem}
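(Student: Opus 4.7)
The plan is to give a direct semantic soundness argument. I fix an arbitrary first-order structure $\mathcal{M}$ together with a variable assignment that satisfies both premises of~\eqref{eq:subsumption-demodulation}, and show that it also satisfies the conclusion $L[r\sigma] \lor D$. Because clauses are implicitly universally quantified, the instance under $\sigma$ of the left premise is again satisfied, so by condition~(1) of Definition~\ref{def:subsumption-demodulation} the disjunction $t \eql r\sigma \lor C\sigma$ holds in $\mathcal{M}$.

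Next I split on which disjunct of this instance holds. If some literal of $C\sigma$ is satisfied, then by condition~(2) the submultiset inclusion $C\sigma \subseteq_M D$ places that very literal inside $D$, so $D$ is satisfied and therefore so is $L[r\sigma] \lor D$. Otherwise $t \eql r\sigma$ holds, and I combine this with the right premise $L[t] \lor D$: either $D$ holds, giving the conclusion directly, or $L[t]$ holds, in which case the congruence axioms for $\eql$ allow me to replace the displayed occurrence of $t$ by $r\sigma$, yielding $L[r\sigma]$ and hence again the conclusion.

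The argument is really just a three-line case analysis, so there is no significant obstacle to overcome; the only point worth flagging is bookkeeping. The two ordering side conditions~(3) and~(4) of Definition~\ref{def:subsumption-demodulation} play \emph{no} role in soundness — they are needed later, to justify that the main premise $L[t] \lor D$ is redundant once the conclusion has been derived and hence to promote the rule from merely sound to a genuine \emph{simplification} rule of the superposition calculus. Thus the soundness proof depends only on the matching condition~(1) and the submultiset condition~(2), which is exactly why the author judges it ``easy to see''.
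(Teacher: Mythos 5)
Your argument is correct and is exactly the semantic case analysis that the paper compresses into its one-line remark that the conclusion is ``easy to see'' to follow from the premises; you are also right that the ordering conditions (3) and (4) are irrelevant to soundness and only matter for Theorem~\ref{thm:sd-simplifying}. No substantive difference from the paper's intended proof.
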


Detecting possible applications of subsumption demodulation involves
(i) selecting one equality of the side clause as rewriting equality and
(ii) matching each of the remaining literals,
denoted $C$ in~\eqref{eq:subsumption-demodulation},
to some literal in the main clause.
Step (i) is similar to finding unit equalities in demodulation, whereas
step (ii) reduces to showing that $C$ subsumes parts of the main premise.
Informally speaking, subsumption demodulation combines
demodulation and subsumption, as discussed in Section~\ref{sec:implementation}.
Note that in step (ii), matching allows any instantiation of $C$ to $C\sigma$
via substitution~$\sigma$;
yet, we we do \emph{not} unify the side and main premises of subsumption demodulation,
as illustrated later in Example~\ref{ex:sd-no-unification}.
Furthermore,
we need to find a term $t$ in the unmatched part $D \setminus C\sigma$ of the main premise,
such that $t$ can be rewritten according to the rewriting equality into $r\sigma$.

As the ordering $\succ$ is partial, the conditions of
Definition~\ref{def:subsumption-demodulation}  must be checked a
posteriori, that is after subsumption demodulation has been applied
with a fixed substitution and revise the substitution if needed.
Note however that if $l\succ r$ in the rewriting equality, then
$l\sigma\succ r\sigma$ for any substitution, so checking the ordering a
priori helps, as illustrated in the following example.

\newcommand{\emphDemTerm}[1]{{#1}}
\newcommand{\emphSubsLit}[1]{{#1}}

\begin{example}%
    \label{ex:sd}
    \newcommand{\cside}{   f(g(x)) \eql g(x) \lor Q(x)           \lor R(y)       }
    \newcommand{\cmain}{ P(f(g(c)))          \lor Q(c) \lor Q(d) \lor R(f(g(d))) }
    \newcommand{\cconc}{ P(  g(c) )          \lor Q(c) \lor Q(d) \lor R(f(g(d))) }
    Let us consider the following two clauses:
    \begin{align*}
        C_1 &= \cside 
        \\
        C_2 &= \cmain
    \end{align*}
    By the subterm property of~$\succ$, we conclude that $f(g(x))\succ g(x)$.
    Hence, the rewriting equality, as well as any instance of it, is oriented.

    Let $\sigma$ be the substitution
    $\sigma = \{ x \mapsto c, y \mapsto f(g(d)) \}$.
    Due to the previous paragraph, we know $f(g(c)) \succ g(c)$
    As equality literals are smaller than non-equality ones,
    we also conclude $P(f(g(c))) \succ f(g(c)) \eql g(c)$.  Thus, we
    have
    $P(f(g(c)))          \lor Q(c) \lor Q(d) \lor R(f(g(d))) ~\succ ~
    f(g(c)) \eql g(c) \lor Q(c)           \lor R(f(g(d)))$ and we can
    apply subsumption demodulation to $C_1$ and $C_2$,
    deriving clause $C_3 = \cconc$.


    We note that demodulation cannot derive $C_3$ from $C_1$ and $C_2$, as there is no unit equality.
    \qed
\end{example}

Example~\ref{ex:sd} highlights limitations of demodulation
when compared to subsumption demodulation.
We next illustrate different possible applications of subsumption
demodulation using a fixed side premise and different main premises.


\begin{example}%
    \label{ex:sd-not-pre-oriented}

    Consider the clause~$C_1 = f(g(x)) \eql g(y) \lor Q(x) \lor R(y)$.
    Only the first literal~$f(g(x)) \eql g(y)$ is a positive equality
    and as such eligible as rewriting equality.
    Note that~$f(g(x))$ and~$g(y)$ are incomparable w.r.t. $\succ$ due
    to occurrences of different variables,
    and hence whether $f(g(x))\sigma\succ g(y)\sigma$ depends on the
    chosen substitution $\sigma$.

    \noindent (1)
            Consider the
            clause~$C_2 = P({f(g(c))}) \lor {Q(c)} \lor {R(c)}$
            as the main premise.
            With the substitution $\sigma_1=\{ x \mapsto c, y \mapsto
            c \}$, 
            we have~$f(g(x))\sigma_1 \succ g(x)\sigma_1$ as
            $f(g(c))\succ g(c)$ due to the
            subterm property of $\succ$, 
            enabling a possible application of subsumption
            demodulation over $C_1$ and $C_2$.

\noindent (2)
            Consider now~$C_3 = P(\emphDemTerm{g(f(g(c)))}) \lor
            \emphSubsLit{Q(c)} \lor \emphSubsLit{R(f(g(c)))}$ as the
            main premise 
            and the substitution $\sigma_2=\{ x \mapsto c, y \mapsto
            f(g(c)) \}$.
            We have $g(y)\sigma_2\succ f(g(x))\sigma_2$,
            as $g(f(g(c)) \succ f(g(c))$. The instance of the
            rewriting
            equality is oriented differently in this case than in the previous
            one, enabling a 
            possible application of subsumption
            demodulation over $C_1$ and $C_3$. 

            \noindent (3)
            On the other hand,
            using the clause~$C_4 = P(f(g(c))) \lor \emphSubsLit{Q(c)} \lor \emphSubsLit{R(z)}$
            as the main premise, the only substitution we can use is $\sigma_3=\{ x \mapsto c, y \mapsto z \}$.
            The corresponding instance of the rewriting equality is then
            $f(g(c)) \eql g(z)$, which cannot be oriented in general.
            Hence, subsumption demodulation cannot be
            applied in this case,
            even though we can find the matching term $f(g(c))$ in $C_4$.
            \qed
          \end{example}

As mentioned before, the substitution $\sigma$ appearing in subsumption demodulation can only be used
to instantiate the side premise, but not for unifying side and main premises, as we would not obtain a simplification rule.

\begin{example}%
    \label{ex:sd-no-unification}
    Consider the clauses:
    \begin{align*}
        C_1 &= f(c) \eql c \lor Q(d)
        \\
        C_2 &= P(f(c)) \lor Q(x)
    \end{align*}
    As we cannot match $Q(d)$ to $Q(x)$ (although we could match $Q(x)$ to $Q(d)$),
    subsumption demodulation is not applicable with premises $C_1$ and $C_2$.
    \qed
\end{example}


\subsection{Simplification using Subsumption Demodulation}\label{sec:sd:simplif}
Note that in the special case where $C$ is the empty clause
in~\eqref{eq:subsumption-demodulation},
subsumption demodulation reduces to demodulation and hence it is a
simplification rule.
We next show that this is the case in general:

\begin{theorem}[Simplification rule]%
    \label{thm:sd-simplifying}
    Subsumption demodulation is a simplification rule and we have:
    \begin{equation*}%
        \prftree
        {l \eql r \lor C \quad}
        {\hcancel{L[t]       \lor D}}
        {         L[r\sigma] \lor D }
    \end{equation*}
    where:
    \begin{enumerate}
        \item
            $l\sigma = t$,
        \item%
            \label{item:CsigmaSubseteqD}
            $C\sigma \subseteq_M D$,
        \item
            $l\sigma \succ r\sigma$, and
        \item
            $L[t] \lor D \succ (l \eql r)\sigma \lor C\sigma$.
    \end{enumerate}
\end{theorem}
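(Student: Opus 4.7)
The plan is to show that the main premise $L[t] \lor D$ is redundant in the search space once the conclusion $L[r\sigma] \lor D$ has been derived. Recall the redundancy criterion from Section~\ref{sec:sup}: it suffices to exhibit clauses strictly smaller than $L[t] \lor D$ w.r.t.\ $\succ$ whose conjunction entails $L[t] \lor D$. The natural witnesses are
\begin{enumerate}
    \item[(a)] the conclusion $L[r\sigma] \lor D$, and
    \item[(b)] the instance $(l \eql r \lor C)\sigma \;=\; l\sigma \eql r\sigma \lor C\sigma$ of the side premise.
\end{enumerate}

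First I would verify the ordering condition for both witnesses. For (b), condition~\ref{item:side-condition-main-gt-side} of Definition~\ref{def:subsumption-demodulation} gives $L[t] \lor D \succ (l \eql r)\sigma \lor C\sigma$ directly. For (a), I would combine condition~3 ($l\sigma \succ r\sigma$) with monotonicity of $\succ$ to obtain $L[t] = L[l\sigma] \succ L[r\sigma]$; then the multiset extension used to lift $\succ$ to clauses yields $L[t] \lor D \succ L[r\sigma] \lor D$, since the only difference between the two multisets is that a literal has been replaced by a strictly smaller one, with $D$ unchanged.

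Next I would verify entailment by a short case split on which disjunct of (b) is satisfied in an arbitrary model. If $l\sigma \eql r\sigma$ holds, then the literals $L[t]$ and $L[r\sigma]$ coincide, so (a) directly delivers $L[t] \lor D$. Otherwise some literal of $C\sigma$ is true; by condition~\ref{item:CsigmaSubseteqD}, $C\sigma \subseteq_M D$, so that literal occurs in $D$, and $D$ holds, so again $L[t] \lor D$ holds. This is the step where the subsumption hypothesis $C\sigma \subseteq_M D$ is essential — it is precisely the ingredient that lets us discharge the extra context $C$ of the rewriting equality against the main premise, and it is what distinguishes subsumption demodulation from ordinary demodulation (where $C$ is empty and this case vanishes trivially).

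If one prefers to work with the more standard ground-instance formulation of redundancy, I would then lift both orderings to any grounding $\tau$ using stability of $\succ$ under substitutions, observing that the above entailment argument is preserved ground-wise. The main obstacle is getting Step~1 stated cleanly: it is a three-layer use of the ordering (subterm/monotonicity on terms, literal extension, multiset extension on clauses), and the argument must be structured so that replacing $L[t]$ by $L[r\sigma]$ inside the clause genuinely yields a smaller clause in the multiset extension. Once that is in place, Steps~2 and~3 are short, and Theorem~\ref{thm:sd-soundness} already supplies the semantic content needed for the entailment check.
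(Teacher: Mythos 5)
Your proposal is correct and follows essentially the same route as the paper's proof: the main premise is made redundant by the conclusion together with (the relevant instance of) the side premise, with condition~4 supplying the ordering constraint for the side premise and condition~3 plus monotonicity (lifted through the multiset extension) giving $L[t] \lor D \succ L[r\sigma] \lor D$. The only difference is that you spell out the entailment case-split on $l\sigma \eql r\sigma$ versus $C\sigma$ explicitly, which the paper dismisses as ``clear''; this added detail is accurate and correctly identifies $C\sigma \subseteq_M D$ as the step that discharges the conditional context.
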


\begin{proof}
    Because of the second condition of the definition of subsumption demodulation,
    $L[t] \lor D$ is clearly a logical consequence of $L[r\sigma] \lor D$ and $l \eql r \lor C$.
    Moreover, from the fourth condition, 
    we trivially have  $L[t] \lor D \succ (l \eql r)\sigma \lor C\sigma$.
    It thus remains to show that $L[r\sigma] \lor D$ is smaller than $L[t] \lor D$ w.r.t. $\succ$.
    As $t = l\sigma \succ r\sigma$, the monotonicity property
    of $\succ$ asserts that $L[t] \succ L[r\sigma]$,
    and hence $L[t] \lor D \succ L[r\sigma] \lor D$.
    This concludes that $L[t] \lor D$ is redundant w.r.t. the conclusion and left-most premise of subsumption demodulation.
    \qed
\end{proof}

\begin{example}
By revisiting Example~\ref{ex:sd}, Theorem~\ref{thm:sd-simplifying}
asserts that clause $C_2$ is simplified into $C_3$, and subsumption
demodulation deletes $C_2$ from the search space. \qed
\end{example}

\subsection{Refining Redundancy}\label{sec:sd:red}

The fourth condition defining subsumption demodulation in
Definition~\ref{def:subsumption-demodulation} is
needed to ensure that the main premise of subsumption demodulation
becomes redundant. However, comparing clauses w.r.t. the ordering
$\succ$ is computationally expensive; yet, not necessary for
subsumption demodulation.
Following the notation of Definition~\ref{def:subsumption-demodulation},
let $D'$ such that $D = C\sigma \lor D'$.
%
By properties of multiset orderings, the condition
$L[t] \lor D  \succ (l \eql r)\sigma \lor C\sigma$ 
is equivalent to
$L[t] \lor D' \succ (l \eql r)\sigma$,
as the literals in $C\sigma$ occur on both sides of $\succ$.
This means, to ensure the redundancy of
the main premise of subsumption demodulation, we only need to ensure
that there is a literal from $L[t] \lor D$ such that this
literal is bigger that the rewriting equality.
\begin{theorem}[Refining redundancy]\label{thm:red}
  The following two conditions are equivalent:
  \begin{itemize}
    \item[(R1)] $L[t] \lor D  \succ (l \eql r)\sigma \lor C\sigma$
    \item[(R2)] $L[t] \lor D' \succ (l \eql r)\sigma$
  \end{itemize}
\end{theorem}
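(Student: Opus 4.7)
The plan is to reduce the theorem to a standard cancellation property of the multiset extension of a well-founded partial ordering. Recall that the extension of $\succ$ to clauses is by multiset comparison of their literals, so it suffices to reason at the level of multisets of literals.

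First I would make the multiset structure explicit. Writing $D = C\sigma \lor D'$ as multisets, the left-hand clause of (R1) is the multiset union
\[
  L[t] \lor D \;=\; \{L[t]\} \uplus C\sigma \uplus D',
\]
and the right-hand clause of (R1) is
\[
  (l \eql r)\sigma \lor C\sigma \;=\; \{(l \eql r)\sigma\} \uplus C\sigma.
\]
So $C\sigma$ appears on both sides. The claim is that cancelling the common multiset $C\sigma$ yields exactly (R2), namely $\{L[t]\} \uplus D' \succ \{(l \eql r)\sigma\}$.

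The key step, and the only nontrivial piece, is the cancellation lemma for multiset orderings: for any multisets $M_1, M_2, N$ over a set ordered by a (well-founded) partial order $\succ$, one has $M_1 \uplus N \succ_{\mathit{mul}} M_2 \uplus N$ if and only if $M_1 \succ_{\mathit{mul}} M_2$. This is a standard fact about Dershowitz--Manna multiset orderings, and I would simply cite it rather than reprove it. The forward direction (from (R1) to (R2)) uses cancellation to strip $C\sigma$ from both sides; the backward direction (from (R2) to (R1)) adds $C\sigma$ to both sides, which again preserves the strict multiset order by the same lemma.

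The main obstacle, if any, is purely notational: being careful that $D = C\sigma \lor D'$ is understood as a multiset decomposition (since subsumption in clause (2) of Definition~\ref{def:subsumption-demodulation} is stated via $\subseteq_M$, this decomposition is legitimate), and that the literal extension of $\succ$ interacts correctly when $(l\eql r)\sigma$ is compared against arbitrary literals of $L[t] \lor D'$. Since the theorem's statement already places us in the regime of multiset comparison on both sides, no further ordering subtleties arise, and the proof closes with one invocation of multiset cancellation in each direction.
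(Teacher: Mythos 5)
Your proof is correct and follows essentially the same route as the paper: the paper justifies the equivalence in one line by noting that the literals of $C\sigma$ occur on both sides of $\succ$ and appealing to "properties of multiset orderings," which is precisely the multiset cancellation lemma you invoke; your write-up merely makes the decomposition $L[t]\lor D = \{L[t]\}\uplus C\sigma\uplus D'$ and the two directions of the cancellation explicit.
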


As mentioned in Section~\ref{sec:sd:sound},
application of subsumption demodulation involves checking that an ordering condition between premises holds
(side condition \ref{item:side-condition-main-gt-side} in Definition~\ref{def:subsumption-demodulation}).
Theorem~\ref{thm:red} asserts that we only need
to find a literal in $L[t] \lor D'$ that is bigger than the rewriting equality  
in order to ensure that the ordering condition is fulfilled.
In the next section we show that by re-using and properly changing
the underlying machinery of first-order  provers for demodulation
and subsumption, subsumption demodulation can efficiently be
implemented in superposition-based proof search.

\setlength{\textfloatsep}{0.5\textfloatsep}
\setlength{\floatsep}{0.5\floatsep}

\section{Subsumption Demodulation in Vampire}\label{sec:implementation}

We implemented subsumption demodulation
in the first-order theorem prover \vampire{}.
Our implementation consists of about 5000 lines
of C++ code and is available at:\\[-.5em]

\noindent{\small
  \url{https://github.com/vprover/vampire/tree/subsumption-demodulation}}\\[-.5em]

As for any simplification rule, we implemented the forward and
backward versions of subsumption demodulation separately.
Our new \vampire{} options controlling subsumption demodulation are
\texttt{fsd} and \texttt{bsd}, both with possible values \texttt{on}
and \texttt{off}, to respectively enable forward and backward subsumption demodulation.

As discussed in Section~\ref{sec:sd}, subsumption demodulation
uses reasoning based on a combination of demodulation and subsumption.
Algorithm~\ref{algo:fsd} details our implementation for {\it forward
subsumption demodulation}.
In a nutshell, given a clause $D$ as main premise, (forward) subsumption
demodulation in \vampire{} consists of the following main steps:

\begin{enumerate}
    \item
        \emph{Retrieve candidate clauses} $C$ as side premises of subsumption
        demodulation (line~\ref{algo:fsd:retrieve} of Algorithm~\ref{algo:fsd}).
        To this end, we design a new clause index with imperfect filtering, by
        modifying the subsumption index in \vampire{}, as discussed later in
        this section.
    \item
        \emph{Prune candidate clauses}
        by checking the conditions of subsumption demodulation
        (lines~\ref{algo:fsd:mlmatch}--\ref{algo:fsd:red} of Algorithm~\ref{algo:fsd}),
        in particular selecting a rewriting equality 
        and matching the remaining literals of the side premise
        to literals of the main premise.
        After this, prune further by
        performing a posteriori checks for orienting the rewriting equality $E$,
        and checking the redundancy of the given main premise $D$.
        To do so, we revised multi-literal matching
        and redundancy checking in \vampire{} (see later).
    \item
        \emph{Build simplified clause} by simplifying and deleting the (main)
        premise $D$ of subsumption demodulation using (forward) simplification
        (line~\ref{algo:fsd:simpl} of Algorithm~\ref{algo:fsd}).
\end{enumerate}

Our implementation of \emph{backward subsumption demodulation} requires only a
few changes to Algorithm~\ref{algo:fsd}:
(i) we use the input clause as side premise $C$ of backward subsumption demodulation and
(ii) we retrieve candidate clauses $D$ as potential main premises of subsumption demodulation.
Additionally,
(iii) instead of returning a single simplified clause $D'$,
we record a replacement clause for each candidate clause $D$
where a simplification was possible.

\begin{algorithm}[t]
    \caption{Forward Subsumption Demodulation -- FSD}
    \label{algo:fsd}
    \DontPrintSemicolon
    \SetKwInOut{Input}{Input}
    \SetKwInOut{Output}{Output}
    \SetKwBlock{Loop}{loop}{end}
    \SetKwFor{Foreach}{for each}{do}{end}

    \Input{Clause $D$, to be used as main premise}
    \Output{Simplified clause $D'$ if (forward) subsumption demodulation is possible}

    \tcp{Retrieve candidate side premises}
    $\mathit{candidates} \coloneqq \mathit{FSDIndex.Retrieve}(D)$ \;\label{algo:fsd:retrieve}
    \Foreach{$C \in \mathit{candidates}$}{
        \While{$m = \mathit{FindNextMLMatch}(C,D)$}{ \label{algo:fsd:mlmatch}
            $\sigma' \coloneqq m.\mathit{GetSubstitution}()$ \;\label{algo:fsd:subst}
            $E \coloneqq m.\mathit{GetRewritingEquality}()$ \; \label{algo:fsd:eq}
            \tcp{$E$ is of the form $l \eql r$, for some terms $l,r$}
            \If{exists term $t$ in $D \setminus C\sigma'$ and substitution
                $\sigma \supseteq \sigma'$ such that $t=l\sigma$}
              {%
                  \If{$\mathit{CheckOrderingConditions}(D, E, t, \sigma) $}{\label{algo:fsd:red}%
                    $D' \coloneqq \mathit{BuildSimplifiedClause}(D, E, t, \sigma)$\; \label{algo:fsd:simpl} 
                    \Return $D'$
                }%
            }%
        }%
    }%
\end{algorithm}

\paragraph{\bf Clause indexing for subsumption demodulation.}
We build upon the indexing approach~\cite{SekarRamakrishnanVoronkov:2001:TermIndexing}
used for subsumption in \vampire{}:
the subsumption index in \vampire{} stores and retrieves candidate clauses for subsumption.
Each clause is indexed by exactly one of its literals.
In principle, any literal of the clause can be chosen.
In order to reduce the number of retrieved candidates,
the best literal is chosen in the sense that the chosen literal
maximizes a certain heuristic (e.g. maximal weight).  
Since the subsumption index is not a perfect index (i.e., it may retrieve non-subsumed
clauses), additional checks on the retrieved clauses are performed.

Using the subsumption index of \vampire{} as the clause index for
forward subsumption demodulation would however omit retrieving clauses
(side premises) in which the rewriting equality is chosen as key for the index,
omitting this way a possible application of subsumption demodulation.
Hence, we need a new clause index in which the best literal
can be adjusted to be the rewriting equality.
To address this issue, we added a new clause index, called the
\emph{forward subsumption demodulation index (FSD index)},
to \vampire{}, as follows:
we index potential side premises either by their
best literal (according to the heuristic), the second best literal, or both.
If the best literal in a clause $C$ is a positive equality
(i.e. a candidate rewriting equality)
but the second best is not, $C$ is indexed by the second best literal, and vice versa.
If both the best and second best literal are positive equalities, $C$ is indexed by both of them.
Furthermore, because the FSD index is exclusively used by forward subsumption demodulation,
this index only needs to keep track of clauses that contain at least one positive equality.

In the backward case, we can in fact reuse \vampire{}'s index for backward subsumption.
Instead we need to query the index by the best literal, the second best literal, or both (as described in the previous paragraph).

\paragraph{\bf Multi-literal matching.}
Similarly to the subsumption index,
our new subsumption demodulation index is  not a perfect index, that
is it performs imperfect filtering for retrieving clauses. Therefore,
additional post-checks are required on the retrieved clauses. In our
work, we devised a multi-literal matching approach to:

\noindent--
choose the rewriting equality among the literals of the side premise $C$,
and

\noindent--
check whether the remaining literals of $C$ can be uniformly
instantiated to the literals of the main premise $D$ of
subsumption demodulation.

There are multiple ways to organize this process.
A simple approach is to
(i) first pick any equality of a side premise $C$
as the rewriting equality of subsumption demodulation,
and then
(ii) invoke the existing multi-literal matching machinery of
\vampire{} to match the remaining literals of $C$ with a subset of
literals of $D$.
For the latter step (ii), the task is to find a
substitution $\sigma$ such that $C\sigma$ becomes a 
submultiset of the given clause $D$.
If the choice of the rewriting equality in step (i) turns out to be wrong, we backtrack.
In our work, we revised the existing multi-literal matching machinery
of \vampire{} to a new multi-literal matching approach for subsumption
demodulation, by using the steps (i)-(ii) and interleaving equality selection with matching.

We note that the  substitution $\sigma$ in step (ii) above is built in
two stages: first we get a partial substitution $\sigma'$ from
multi-literal matching and then (possibly) extend $\sigma'$ to
$\sigma$ by matching term instances of the rewriting equality with
terms of $D\setminus C\sigma$.

\begin{example}\label{eq:mml}
    Let $D$ be the clause $P(f(c,d)) \lor Q(c)$.
    Assume that our (FSD) clause index retrieves
    the clause $C=f(x,y) \eql y \lor Q(x)$ from the search space
    (line~\ref{algo:fsd:retrieve} of Algorithm~\ref{algo:fsd}).
    We then invoke our multi-literal matcher
    (line~\ref{algo:fsd:mlmatch} of Algorithm~\ref{algo:fsd}),
    which matches the literal $Q(x)$ of $C$ to the literal $Q(c)$ of $D$ and
    selects the equality literal~$f(x,y) \eql y$ of $C$
    as the rewriting equality for subsumption demodulation
    over $C$ and $D$.
    The matcher returns the choice of rewriting equality
    and the partial substitution~$\sigma' = \{ x \mapsto c \}$.
    We arrive at the final substitution~$\sigma = \{ x \mapsto c, y \mapsto d \}$
    only when we match the instance $f(x,y)\sigma'$, that is $f(c,y)$,
    of the left-hand side of the rewriting equality to the literal~$f(c,d)$ of $D$.
    Using $\sigma$, subsumption demodulation over $C$ and $D$ will
    derive $P(d) \lor Q(c)$,
    after ensuring that $D$ becomes redundant (line~\ref{algo:fsd:simpl} of Algorithm~\ref{algo:fsd}).
    \qed
\end{example}

We further note that multi-literal matching is an NP-complete problem.
Our multi-literal matching problems may have more than one solution,
with possibly only some (or none) of them
leading to successful applications of subsumption demodulation.
In our implementation, we examine all solutions retrieved by multi-literal matching.
We also experimented with limiting the number of matches examined after multi-literal matching
but did not observe relevant improvements.
Yet, our implementation in \vampire{} also supports an additional option
allowing the user to specify an upper bound on how many solutions of
multi-literal matching should be examined.

\paragraph{\bf Redundancy checking.}
To ensure redundancy of the main premise $D$ after the subsumption
demodulation inference,  we need to check two properties.
First, the instance $E\sigma$ of the rewriting equality $E$ must be oriented. This is a simple ordering check.
Second, the main premise $D$ must be larger than the side premise $C$. Thanks
to Theorem~\ref{thm:red}, this latter condition is  reduced to finding
a literal among the unmatched part of the main premise $D$ that is bigger than 
the instance $E\sigma$ of the rewriting equality $E$.

\begin{example}\label{eq:mml:red}
  In case of Example~\ref{eq:mml}, the rewriting equality $E$ is
  oriented and hence $E\sigma$ is also oriented. Next, the literal
  $P(f(c,d))$ is bigger than $E\sigma$, and hence $D$ is redundant
  w.r.t. $C$ and $D'$. \qed
 \end{example}

 

\vspace{-1em}
\section{Experiments}\label{sec:experiments}

We evaluated our implementation of subsumption demodulation in
\vampire{} on the examples of the TPTP~\cite{Sutcliffe:2017:TPTP} and
SMT-LIB~\cite{BarFT-SMTLIB} repositories.
All our experiments were carried out on the StarExec cluster~\cite{StumpSutcliffeTinelli:2014:StarExec}.

\noindent{\bf Benchmark setup.}  From the 22,686 problems in the TPTP benchmark set, \vampire{} can
parse 18,232 problems.
Out of these problems, we only used those problems that involve
equalities as subsumption demodulation is only applicable in the
presence of (at least one) equality. As such, we used 13,924 TPTP problems
in our experiments.

On the other hand, when using the SMT-LIB repository, we chose the benchmarks from categories LIA, UF, UFDT, UFDTLIA, and
UFLIA, as these benchmarks involve reasoning with both theories and
quantifiers and the background theories are the theories that
\vampire{} supports.
These are 22,951 SMT-LIB problems in total, of which 22,833 problems remain
after removing those where equality does not occur.

\noindent{\bf Comparative experiments with \vampire{}.} As a first
experimental study, we compared the performance of subsumption
demodulation in \vampire{} for different values of \texttt{fsd} and
\texttt{bsd}, that is by using forward (FSD) and/or backward (BSD) subsumption
demodulation.
To this end, we evaluated subsumption demodulation
using the CASC and SMTCOMP schedules of \vampire{}'s portfolio mode.
In order to test subsumption demodulation with the portfolio mode,
we added the options \texttt{fsd} and/or \texttt{bsd} to \emph{all}
strategies of \vampire{}. 
While the resulting strategy schedules 
could potentially be further improved,
it allowed us to test FSD/BSD with a variety of strategies. 


\begin{table}[t]
    \centering
    \caption{%
        Comparing \vampire{} with and without subsumption demodulation
        on TPTP, using \vampire{} in portfolio mode.
    }
    \label{tab:experiment-casc}
    \begin{tabular}{l||c|c|c}
        Configuration   & Total     & Solved    & New (SAT+UNSAT)  \\ 
        \hline
        \vampire{}          & 13,924    & 9,923     & --               \\ 
        \hline
        \vampire{}, with FSD             & 13,924    & 9,757     & 20 (3+17)        \\ 
        \vampire{}, with BSD             & 13,924    & 9,797     & 14 (2+12)        \\ 
        \vampire{}, with FSD and BSD         & 13,924    & 9,734     & 30 (6+24)        \\ 
    \end{tabular}
\end{table}

\begin{table}[t]
    \centering
    \caption{%
        Comparing \vampire{} with and without subsumption demodulation
        on SMT-LIB, using \vampire{} in portfolio mode.
    }
    \label{tab:experiment-casc-smtlib}
    \begin{tabular}{l||c|c|c}
        Configuration   & Total     & Solved    & New (SAT+UNSAT)   \\
        \hline
        \vampire{}          & 22,833    & 13,705     & --               \\
        \hline
        \vampire{}, with FSD             & 22,833    & 13,620    & 55 (1+54)         \\
        \vampire{}, with BSD             & 22,833    & 13,632    & 48 (0+48)         \\
        \vampire{}, with FSD and BSD         & 22,833    & 13,607    & 76 (0+76)         \\
    \end{tabular}
\end{table}

Our results are summarized in Tables~\ref{tab:experiment-casc}-\ref{tab:experiment-casc-smtlib}. The
first column of these tables lists the \vampire{} version and
configuration, where \vampire{} refers to
\vampire{} in its portfolio mode (version 4.4). 
Lines 2-4 of these tables use our new \vampire{}, that is our
implementation of subsumption demodulation in \vampire{}.
The column ``Solved'' reports, respectively,
the total number of TPTP and SMT-LIB problems solved
by the considered \vampire{} configurations.
Column ``New'' lists, respectively,
the number of TPTP and SMT-LIB problems solved by
the version with subsumption demodulation but not by the portfolio version of \vampire{}.
This column also indicates in parentheses how many of the solved problems were satisfiable/unsatisfiable.

While in total the portfolio mode of \vampire{} can solve
more problems, we note that this comes at no suprise as the portfolio
mode of \vampire{} is highly tuned using the existing \vampire{}
options. 
In our experiments, we were interested to see whether subsumption
demodulation in \vampire{} can solve problems that cannot be solved by
the  portfolio mode of \vampire{}. The columns ``New''  of
Tables~\ref{tab:experiment-casc}-\ref{tab:experiment-casc-smtlib} give
practical evidence  of the  impact of subsumption demodulation: there
are 30 new TPTP problems and 76 SMT-LIB problems%
\footnote{\scriptsize The list of these new problems is available at \\ \url{https://gist.github.com/JakobR/605a7b7db0101259052e137ade54b32c}}
that the portfolio version of \vampire{} cannot
solve, but forward and backward subsumption demodulation in \vampire{}
can.

\noindent{\bf New problems solved only by subsumption demodulation.} %
%
Building upon our results from Tables~\ref{tab:experiment-casc}-\ref{tab:experiment-casc-smtlib}, we
analysed how many new problems subsumption
demodulation in \vampire{} can solve when compared to other
state-of-the-art reasoners. To this end, we evaluated our work against
the superposition
provers \eprover{} (version 2.4) and \spass{} (version 3.9),
as well as the SMT solvers \textsc{CVC4} (version 1.7) and {\sc Z3}
(version 4.8.7).
We note however, that when using our 30 new problems from
Table~\ref{tab:experiment-casc}, we could not compare our results
against \textsc{Z3} as \textsc{Z3} does not natively parse TPTP. On
the other hand, when using our 76 new problems from
Table~\ref{tab:experiment-casc-smtlib}, we only compared against
\textsc{CVC4} and \textsc{Z3}, as \eprover{} and \spass{} do not
support the SMT-LIB syntax.

Table~\ref{tab:experiment-casc-remains} summarizes our findings.
First, 11 of our 30 ``new'' TPTP problems can only be solved
using forward and backward subsumption demodulation in \vampire{};
none of the other systems were able solve these problems.

Second, while all our 76 ``new'' SMT-LIB problems
can also be solved by {\sc CVC4} and {\sc Z3} together,
we note that out of these 76 problems there are 10 problems that {\sc CVC4} cannot solve,
and similarly 27 problems that {\sc Z3} cannot solve.

\begin{table}[t]
    \centering
    \caption{%
        Comparing \vampire{} with subsumption demodulation against
        other solvers, using the ``new'' TPTP and SMT-LIB problems of
        Tables~\ref{tab:experiment-casc}-\ref{tab:experiment-casc-smtlib}
        and running \vampire{} in portfolio mode.
    }
    \label{tab:experiment-casc-remains}
    \begin{tabular}{l|c||c}
        Solver/configuration    &  TPTP problems & SMT-LIB problems\\
        \hline
        Baseline: \vampire{}, with FSD and BSD      & 30    & 76    \\
        \hline
        \eprover{} with \texttt{-{}-auto-schedule}  & 14    &  -    \\
        \spass{} (default)                          &  4    &  -    \\
        \spass{} (local contextual rewriting)       &  6    &  -    \\
        \spass{} (subterm contextual rewriting)     &  5    &  -    \\
        \textsc{CVC4} (default)                     &  7    & 66    \\
        \textsc{Z3} (default)                       &  -    & 49    \\
        \hline
        Only solved by \vampire{}, with FSD and BSD & 11    &  0    \\
    \end{tabular}
\end{table}

\noindent{\bf Comparative experiments without AVATAR.}
Finally, we investigated the effect of subsumption demodulation
in \vampire{} without AVATAR~\cite{AVATAR14}.
We used the default mode of~\vampire{}
(that is, without using a portfolio approach)
and turned off the AVATAR setting.
While this configuration solves less problems than the portfolio mode of \vampire{},
so far \vampire{} is the only superposition-based theorem prover implementing
AVATAR.  Hence, evaluating  subsumption demodulation in \vampire{} without
AVATAR is more relevant to other reasoners.
Further, as AVATAR may often split non-unit clauses into unit clauses, it may potentially simulate applications of subsumption demodulation
using demodulation.
Table~\ref{tab:experiment-casc-default-av-off} shows that this is indeed the case:
with both \texttt{fsd} and \texttt{bsd} enabled, subsumption
demodulation in \vampire{} can prove 190 TPTP problems and 173
SMT-LIB examples that the default \vampire{}
without AVATAR cannot solve. Again, the column ``New'' denotes the
number of problems solved by the respective configuration but not by
the default mode of \vampire{} without AVATAR. 

\begin{table}[t]
    \centering
    \caption{%
        Comparing \vampire{} in default mode and without AVATAR,
        with and without subsumption demodulation.
    }
    \label{tab:experiment-casc-default-av-off}
    \begin{tabular}{l||c|c|c||c|c|c}
        ~ & \multicolumn{3}{c||}{TPTP problems} & \multicolumn{3}{c}{SMT-LIB problems} \\
        \hline
        \hline
        Configuration                   & Total     & Solved    & New           & Total     & Solved    & New           \\
        ~                               &           &           & (SAT+UNSAT)   &           &           & (SAT+UNSAT)   \\
        \hline
        \vampire{}                      & 13,924    & 6,601     & --            & 22,833    & 9,608     & --            \\
        \hline
        \vampire{}, with FSD            & 13,924    & 6,539     & 152 (13+139)  & 22,833    & 9,597     & 134 (1+133)   \\
        \vampire{}, with BSD            & 13,924    & 6,471     & 112 (12+100)  & 22,833    & 9,541     & 87 (0+87)     \\
        \vampire{}, with FSD and BSD    & 13,924    & 6,510     & 190 (15+175)  & 22,833    & 9,581     & 173 (1+172)   \\
    \end{tabular}
\end{table}

\nocite{Tange:2018:GNUParallel}

\vspace{-1em}
\section{Conclusion}
\vspace{-1em}
We introduced the simplifying inference rule subsumption demodulation to improve support for
reasoning with conditional equalities in superposition-based
first-order theorem proving.
Subsumption demodulation revises existing machineries of superposition
provers and can therefore be efficiently integrated in superposition
reasoning. Our implementation in \vampire{} shows that subsumption
demodulation solves many new examples that existing provers, including
first-order and SMT solvers, cannot handle. Future work includes the
design of more sophisticated approaches for selecting rewriting
equalities and improving the imperfect filtering of clauses indexes.

{\small
\paragraph{\bf Acknowledgements.}
This work was funded by
the ERC Starting Grant 2014 SYMCAR 639270,
the ERC Proof of Concept Grant 2018 SYMELS 842066,
the Wallenberg Academy Fellowship 2014 TheProSE, and
the Austrian FWF research project W1255-N23.
}

\bibliographystyle{splncs04}
\bibliography{bib}

\begin{thebibliography}{10}
\providecommand{\url}[1]{\texttt{#1}}
\providecommand{\urlprefix}{URL }
\providecommand{\doi}[1]{https://doi.org/#1}

\bibitem{BG94}
Bachmair, L., Ganzinger, H.: {Rewrite-Based Equational Theorem Proving with
  Selection and Simplification}. J. Log. Comput.  \textbf{4}(3),  217--247
  (1994)

\bibitem{Ganzinger01}
Bachmair, L., Ganzinger, H., McAllester, D.A., Lynch, C.: {Resolution Theorem
  Proving}. In: Handbook of Automated Reasoning, pp. 19--99 (2001)

\bibitem{CVC4}
Barrett, C., Conway, C.L., Deters, M., Hadarean, L., Jovanovi{\'c}, D., King,
  T., Reynolds, A., Tinelli, C.: {CVC4}. In: International Conference on
  Computer Aided Verification. pp. 171--177. Springer (2011)

\bibitem{BarFT-SMTLIB}
Barrett, C., Fontaine, P., Tinelli, C.: {The Satisfiability Modulo Theories
  Library (SMT-LIB)}. \url{www.SMT-LIB.org} (2016)

\bibitem{Rapid19}
Barthe, G., Eilers, R., Georgiou, P., Gleiss, B., Kov{\'{a}}cs, L., Maffei, M.:
  {Verifying Relational Properties using Trace Logic}. In: Proc. of FMCAD. pp.
  170--178 (2019)

\bibitem{Horn15}
Bj{\o}rner, N., Gurfinkel, A., McMillan, K.L., Rybalchenko, A.: Horn clause
  solvers for program verification. In: Fields of Logic and Computation {II}.
  pp. 24--51 (2015)

\bibitem{Z3}
De~Moura, L., Bj{\o}rner, N.: {Z3}: An efficient {SMT} solver. In:
  International conference on Tools and Algorithms for the Construction and
  Analysis of Systems. pp. 337--340. Springer (2008)

\bibitem{DPLLT}
Ganzinger, H., Hagen, G., Nieuwenhuis, R., Oliveras, A., Tinelli, C.:
  {{DPLL}({T}): Fast Decision Procedures}. In: Proc. of CAV. pp. 175--188
  (2004)

\bibitem{HillenbrandPiskacWaldmannWeidenbach:2013}
Hillenbrand, T., Piskac, R., Waldmann, U., Weidenbach, C.: From search to
  computation: Redundancy criteria and simplification at work. In: Voronkov,
  A., Weidenbach, C. (eds.) Programming Logics: Essays in Memory of Harald
  Ganzinger, pp. 169--193. Springer Berlin Heidelberg, Berlin, Heidelberg
  (2013)

\bibitem{kovacs2013first}
Kov{\'a}cs, L., Voronkov, A.: First-order theorem proving and vampire. In:
  International Conference on Computer Aided Verification. pp. 1--35. Springer
  (2013)

\bibitem{Rubio01}
Nieuwenhuis, R., Rubio, A.: {Paramodulation-Based Theorem Proving}. In:
  Handbook of Automated Reasoning, pp. 371--443 (2001)

\bibitem{CVC4Strings}
Reynolds, A., Woo, M., Barrett, C.W., Brumley, D., Liang, T., Tinelli, C.:
  {Scaling Up {DPLL(T)} String Solvers Using Context-Dependent Simplification}.
  In: Proc. of CAV. pp. 453--474 (2017)

\bibitem{E19}
Schulz, S., Cruanes, S., Vukmirovic, P.: Faster, higher, stronger: {E} 2.3. In:
  Proc. of CADE. pp. 495--507 (2019)

\bibitem{SekarRamakrishnanVoronkov:2001:TermIndexing}
Sekar, R., Ramakrishnan, I.V., Voronkov, A.: Term indexing. In: Robinson, J.A.,
  Voronkov, A. (eds.) Handbook of Automated Reasoning, pp. 1853--1964. Elsevier
  Science Publishers B. V. (2001)

\bibitem{StumpSutcliffeTinelli:2014:StarExec}
Stump, A., Sutcliffe, G., Tinelli, C.: {StarExec: {A} Cross-Community
  Infrastructure for Logic Solving}. In: Proc. of IJCAR. pp. 367--373 (2014)

\bibitem{Sutcliffe:2017:TPTP}
Sutcliffe, G.: {The TPTP Problem Library and Associated Infrastructure. From
  CNF to TH0, TPTP v6.4.0}. Journal of Automated Reasoning  \textbf{59}(4),
  483--502 (Feb 2017)

\bibitem{Tange:2018:GNUParallel}
Tange, O.: {GNU} Parallel 2018. Ole Tange (Mar 2018)

\bibitem{AVATAR14}
Voronkov, A.: {AVATAR:} the architecture for first-order theorem provers. In:
  Proc. of CAV. pp. 696--710 (2014)

\bibitem{Weidenbach01}
Weidenbach, C.: {Combining Superposition, Sorts and Splitting}. In: Handbook of
  Automated Reasoning, pp. 1965--2013 (2001)

\bibitem{SPASS09}
Weidenbach, C., Dimova, D., Fietzke, A., Kumar, R., Suda, M., Wischnewski, P.:
  {SPASS} version 3.5. In: Proc. of CADE. pp. 140--145 (2009)

\bibitem{WeidenbachWischnewski:2008}
Weidenbach, C., Wischnewski, P.: {Contextual Rewriting in {SPASS}}. In: Proc.
  of PAAR (2008)

\end{thebibliography}

\end{document}